\newtheorem{thm}{Theorem}
\newtheorem{lem}[thm]{Lemma}
\setlist{noitemsep,leftmargin=*}
\begin{document}

\title{Quantum Data Fitting Algorithm for Non-sparse Matrices}

\author{Guangxi Li} 
\email[]{Guangxi.Li@student.uts.edu.au}
\affiliation{Centre for Quantum Software and Information, Faculty of Engineering and Information Technology, University of Technology Sydney, Sydeny, NSW 2007, AUSTRALIA}
\author{Youle Wang}
\affiliation{Centre for Quantum Software and Information, Faculty of Engineering and Information Technology, University of Technology Sydney, Sydeny, NSW 2007, AUSTRALIA}

\author{Yu Luo}
\affiliation{Centre for Quantum Software and Information, Faculty of Engineering and Information Technology, University of Technology Sydney, Sydeny, NSW 2007, AUSTRALIA}

\affiliation{College of Computer Science, Shaanxi Normal University, Xi'an, 710062, China}

\author{Yuan Feng}
\affiliation{Centre for Quantum Software and Information, Faculty of Engineering and Information Technology, University of Technology Sydney, Sydeny, NSW 2007, AUSTRALIA}


\begin{abstract}
We propose a quantum data fitting algorithm for non-sparse matrices, which is based on the Quantum Singular Value Estimation (QSVE) subroutine and a novel efficient method for recovering the signs of eigenvalues. Our algorithm generalizes the quantum data fitting algorithm of Wiebe, Braun, and Lloyd for sparse and well-conditioned matrices by adding a regularization term to avoid the over-fitting problem, which is a very important problem in machine learning. 
As a result, the algorithm achieves a sparsity-independent runtime of
$O(\kappa^2\sqrt{N}\mathrm{polylog}(N)/(\epsilon\log\kappa))$ for an $N\times N$ dimensional Hermitian matrix $\bm{F}$, where $\kappa$ denotes the condition number of $\bm{F}$ and $\epsilon$ is the precision parameter. This amounts to a polynomial speedup on the dimension of matrices when compared with the classical data fitting algorithms, and a strictly less than quadratic dependence on $\kappa$.
\end{abstract}

\maketitle

\section{Introduction}

Quantum machine learning is an emerging research area in the intersection of quantum computing and machine learning \cite{biamonte2017quantum,wittek2014quantum}. {In recent years, a number of quantum machine learning algorithms have been proposed, most of which could provide polynomial, sometimes exponential, speedup when compared with classical machine learning algorithms. }
This trend began with the breakthrough quantum algorithm of Harrow, Hassidim and Lloyd (HHL) \cite{harrow2009quantum} which solves a linear system $\bm{Ax}=\bm{b}$ 
with exponential acceleration over classical algorithms when the matrix $\bm{A}$ is sparse and well conditioned. More importantly, (revised versions of) HHL has been employed as a subroutine by many quantum machine learning algorithms in solving problems such as Quantum Support Vector Machine (QSVM) \cite{rebentrost2014quantum}, Quantum Recommendation Systems \cite{kerenidis2016quantum}, and so on \cite{schuld2016prediction,wiebe2014quantum,wiebe2014quantum1,kapoor2016quantum,zhao2015quantum,lloyd2014quantum,low2014quantum,rebentrost2016quantum,ciliberto2018quantum}.

In this paper, we are concerned with the Quantum Data Fitting (QDF) problem, 
whose goal is to find a quantum state proportional to the optimal fit parameter of the least squares fitting problem. It was shown in \cite{wiebe2012quantum} that by applying HHL algorithm, QDF problem can be solved in time $O(s^3\kappa^6 \log(N)/\epsilon)$, where $N$, $s$ and $\kappa$ denote the dimension, sparsity (the maximum number of nonzero elements in any given row or column), and condition number of $\bm{A}$, respectively, and $\epsilon$ is the maximum allowed distance between the output quantum state and the exact solution. Although the running time could be improved to $O(s\kappa^6\log(N)/\epsilon^2)$ via the simulation method of \cite{childs2010relationship,berry2009black} or $O(s^2\kappa^3\log(N)/\epsilon^2)$ using the method of \cite{liu2015fast}, the dependence over $s$ is at least linear, leading to a running time of at least $O(N\log(N))$ for non-sparse matrices. Hence, it remains open whether it is possible, and how, to decrease the the dependence on $N$ for non-sparse matrices in solving QDF problems.


Another issue not addressed by the QDF algorithm proposed in \cite{wiebe2012quantum} is the over-fitting problem \cite{hawkins2004problem}; i.e., in some cases while the fitting of existing data is significantly good, the prediction of future data may remain poor.
In this paper, we consider the generalized standard technique for data fitting \footnote{
In machine learning, least squares (LSQ) fitting is a standard technique for data fitting and often used interchangeably with the term data fitting.}, i.e., the regularized least squares fitting, also known as the ridge regression \cite{hoerl1970ridge}, by adding a regularization term to avoid the over-fitting problem.
We propose a quantum data fitting algorithm for regularized least squares fitting problems with non-sparse matrices, with a running time of $O(\kappa^2\sqrt{N}\mathrm{polylog}(N)/(\epsilon\log\kappa))$, a polynomial speedup (on the dimension $N$) over classical algorithms.
The main result is given in Theorem~\ref{thm:main_result}.

\paragraph*{Related Works.}

Recently, inspired by the quantum recommendation systems and based on the Quantum Singular Value Estimation (QSVE) subroutine \cite{kerenidis2016quantum}, Wossnig, Zhao and Prakash (WZP) \cite{wossnig2018quantum} proposed a dense version of HHL.  
Recall that QSVE can only estimate the magnitude, but not the sign, of the eigenvalues of a Hermitian matrix. 
The key technique of WZP algorithm is to first call QSVE subroutines for matrices $\bm{A}$ and $\bm{A}+\mu \bm{I}$, respectively, where $\mu=1/\kappa$ is a relatively small number, and then compare the corresponding eigenvalues of these matrices to obtain the desired sign.

However, this technique has two potential disadvantages: 1) we need to construct two, instead of one, binary tree data structures as proposed in \cite{kerenidis2016quantum}. Constructing these binary trees is time-consuming; it is linear in the number of non-zero elements of the matrix; 2) it becomes difficult to implement if $\kappa$ is significantly large, as a small $\mu$ requires a high precision quantum computer to process.
By comparison, in this paper, we recover the signs of eigenvalues of $\bm{A}$ by using only one binary tree data structure for the matrix $\bm{A}+\|\bm{A}\|_*\bm{I}$, where $\|\bm{A}\|_*$ denotes the spectral norm of $\bm{A}$. Furthermore, we do not need to perform the comparison operation, which might introduce additional errors to the system.

It is worth noting that recently, Meng et al. \cite{meng2018quantum} and Yu et al. also \cite{yu2017quantum} proposed quantum ridge regression algorithms in the non-sparse cases.
However, the algorithm in \cite{yu2017quantum} only works for low-rank matrices, while that in \cite{meng2018quantum} uses the same technique as WZP, thus having the same potential disadvantages as we pointed out above. Moreover, neither of them explore the impact of the hyper-parameter on the time complexity of the algorithm, like we do in the current paper.


\section{Regularized Least Squares Fitting}\label{sec:LSF}

The least squares fitting problem \cite{wiebe2012quantum} can be described as follows. Given a set of $m$ samples $\{(x_i,y_i)\in \mathbb{C}^2 : 1\leq i\leq m\}$ \footnote{Here, we, following \cite{wiebe2012quantum}, consider the case that the data points are scalar. However, if they are more general, e.g., vectors, then we can let each function $f_j(\cdot)$ be equaling to each element of the vector, to match the more common description of the least squares fitting problem.}, the goal is to find a parametric function $f(x, \bm{w}) : \mathbb{C}^{n+1} \rightarrow \mathbb{C}$ to well approximate these points, where $\bm{w}\in \mathbb C^n$ is the fit parameter. We assume that $f(x, \bm{w})$ is linear in $\bm{w}$, but not necessarily so in $x$. In other words, 
\begin{align}
    f(x,\bm{w}):=\sum_{j=1}^n w_jf_j(x)
\end{align}
for some functions $f_j: \mathbb C^{n}\rightarrow \mathbb C$. The objective is to minimize the sum of the distance between the fit function and the target outputs $\bm{y}$ and a regularization term, i.e.,
\begin{align}\label{eq:df:objective}
    \min_{\bm{w}} \sum_{i=1}^m |f(x_i,\bm{w})-y_i|^2 +\gamma \bm{w}^\dag \bm{w} = \|\bm{Fw}-\bm{y}\|^2+ \gamma \|\bm{w}\|^2,
\end{align}
where $\bm{F} =  (f_j(x_i))_{i,j}$ is an $m\times n$ matrix, $\bm{y}=(y_1,y_2,\ldots,y_m)^\top$, and $\gamma>0$ denotes the hyper-parameter of the regularization term which is a common technique in machine learning.
In this paper, we assume $\bm{F}$ is given, and our task is to find the optimal $\bm{w}$. 
The solution to the regularized least squares fitting problem (\ref{eq:df:objective}) is given by 
\begin{align}\label{eq:df:solution_original}
    \bm{w}^*=(\bm{F}^\dag \bm{F}+\gamma \bm{I}_n)^{-1}\bm{F}^\dag \bm{y},
\end{align}
where $\bm{I}_n$ denotes the $n$-by-$n$ identity matrix.

Note that we can assume without loss of generality that the matrix $\bm{F}$ is Hermitian. Otherwise, define 
    $\tilde{\bm{F}} =\tilde{\bm{F}}^\dag :=\left[\begin{array}{cc}
    \bm{0} & \bm{F} \\
    \bm{F}^\dag	 & \bm{0}
	\end{array}
	\right]$ and $\tilde{\bm{y}} := \left[\begin{array}{c}
    \bm{y}  \\   \bm{0}    \end{array}\right]\in \mathbb C^{m+n}$. Then it is easy to check that $\bm{w}^*$ satisfies Eq.~(\ref{eq:df:solution_original}) if and only if
\begin{align}\label{eq:df:solution}
    \tilde{\bm{w}}^*=(\tilde{\bm{F}}^\dag \tilde{\bm{F}}+\gamma \bm{I}_{m+n})^{-1}\tilde{\bm{F}}^\dag \tilde{\bm{y}},
\end{align}
where $\tilde{\bm{w}}^*:=\left[\begin{array}{c}
    \bm{0}  \\   \bm{w}^*    \end{array}\right]\in \mathbb C^{m+n}$.
In other words, for any non-Hermitian matrix, we can construct a Hermitian matrix which gives the same optimal solution in Eq.~(\ref{eq:df:solution_original}) by expanding the vector's dimension \cite{harrow2009quantum}.

\section{Quantum Singular Value Estimation}
Quantum Singular Value Estimation (QSVE) can be viewed as extending Phase Estimation \cite{kitaev1995quantum} from unitary to nonunitary matrices, which is also the primary algorithm subroutine for our quantum data fitting algorithm. We briefly state it in the following:

Given a matrix $\bm{A}\in \mathbb R^{m\times n}$ which is stored in a classical binary tree data structure, an algorithm having quantum access to the data structure can create, in time polylog$(mn)$, the quantum state $\ket{A_i}$ corresponding to each row $A_i$ of the matrix $\bm{A}$ \cite{kerenidis2016quantum}. Note also that if each element $A_{ij}$ of $\bm{A}$ is a complex number, the binary tree just stores its squared length $|A_{ij}|^2$ in each leaf node.

\begin{thm}\label{thm:df:qsve}
Quantum Singular Value Estimation \cite{kerenidis2016quantum}: Let $\bm{A}\in \mathbb R^{m\times n}$ be a matrix stored in the data structure presented above, and $\bm{A}=\sum_i\sigma_i \bm{u}_i\bm{v}_i^\dag$ be its singular value decomposition. For a precision parameter $\epsilon > 0$, there is a quantum algorithm that performs the mapping $\ket{\psi}=\sum_{j}\beta_j\ket{v_j}\rightarrow \sum_{j}\beta_j\ket{v_j}\ket{\overline{\sigma_j}}$ such that $|\overline{\sigma_j}-\sigma_j|\leq \epsilon$ for all $j$ with probability at least $1-1/\mathrm{poly}(mn)$ in time $O(\|\bm{A}\|_F\mathrm{polylog}(mn)/\epsilon)$.
\end{thm}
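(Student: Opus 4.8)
The plan is to reduce singular value estimation to quantum phase estimation on a unitary built out of two reflections, in the style of a Szegedy-type walk. The first step is to use the binary-tree data structure to implement, with $O(\mathrm{polylog}(mn))$ gates and queries, two isometries. Let $\|A_{i,\cdot}\|$ denote the Euclidean norm of the $i$-th row of $\bm{A}$; the data structure permits conditional preparation of the row states $\ket{A_i}=\frac{1}{\|A_{i,\cdot}\|}\sum_k A_{ik}\ket{k}$ and of the single state $\ket{\xi}=\frac{1}{\|\bm{A}\|_F}\sum_i\|A_{i,\cdot}\|\ket{i}$. Define $P=\sum_i\ket{i}\ket{A_i}\bra{i}$ and $Q=\sum_j\ket{\xi}\ket{j}\bra{j}$, both mapping into $\mathbb{C}^m\otimes\mathbb{C}^n$. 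A short computation gives $P^\dagger P=\bm{I}_m$, $Q^\dagger Q=\bm{I}_n$, and, crucially, $P^\dagger Q=\bm{A}/\|\bm{A}\|_F$; moreover $P,Q$ and their adjoints are realizable within the stated time bound.

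Next I would form the unitary $W=(2PP^\dagger-\bm{I})(2QQ^\dagger-\bm{I})$ on $\mathbb{C}^m\otimes\mathbb{C}^n$ and analyze its spectrum via Jordan's lemma. The ranges of $P$ and $Q$ split into mutually orthogonal $W$-invariant subspaces of dimension at most two; the two-dimensional block associated with the singular value $\sigma_j$ is a rotation with eigenvalues $e^{\pm i\theta_j}$, where $\cos(\theta_j/2)=\sigma_j/\|\bm{A}\|_F$ (the cosines of the principal angles between $\mathrm{range}(P)$ and $\mathrm{range}(Q)$ are exactly the singular values of $P^\dagger Q$). Given the input $\ket{\psi}=\sum_j\beta_j\ket{v_j}$ in the right-singular basis, applying $Q$ embeds it as $\sum_j\beta_j\,Q\ket{v_j}$, and each $Q\ket{v_j}$ lies inside the plane attached to $\sigma_j$, split between its $e^{\pm i\theta_j}$ eigenvectors.

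I would then run phase estimation on $W$ against this embedded state to obtain an estimate $\overline{\theta_j}$ (in $[0,2\pi)$), compute $\overline{\sigma_j}=\|\bm{A}\|_F\,|\cos(\overline{\theta_j}/2)|$ into a fresh register — the absolute value neutralizes the harmless $\theta_j\leftrightarrow 2\pi-\theta_j$ aliasing — and finally uncompute the phase-estimation workspace and apply $Q^\dagger$ to recover $\sum_j\beta_j\ket{v_j}\ket{\overline{\sigma_j}}$. For the error bound: phase estimation with $O(1/\delta)$ applications of $W$ gives $|\overline{\theta_j}-\theta_j|\le\delta$, and since the map $\theta\mapsto\|\bm{A}\|_F\cos(\theta/2)$ is $\tfrac{1}{2}\|\bm{A}\|_F$-Lipschitz, taking $\delta=\Theta(\epsilon/\|\bm{A}\|_F)$ forces $|\overline{\sigma_j}-\sigma_j|\le\epsilon$. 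Each application of $W$ costs $\mathrm{polylog}(mn)$, giving total runtime $O(\|\bm{A}\|_F\,\mathrm{polylog}(mn)/\epsilon)$; amplifying the phase-estimation success probability to $1-1/\mathrm{poly}(mn)$ costs only an extra $\mathrm{polylog}(mn)$ factor.

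The steps I expect to be most delicate are: (i) establishing that both reflections $2PP^\dagger-\bm{I}$ and $2QQ^\dagger-\bm{I}$ are genuinely implementable in $\mathrm{polylog}(mn)$ time, which hinges on the binary tree supporting the controlled preparations of $\ket{A_i}$ and $\ket{\xi}$ by a logarithmic-depth cascade of controlled rotations; and (ii) pinning down the spectral correspondence rigorously, in particular the degenerate blocks $\theta_j\in\{0,\pi\}$ (i.e. $\sigma_j=\|\bm{A}\|_F$ or $\sigma_j=0$, including the kernel directions) and the claim that $Q\ket{v_j}$ has no leakage into planes indexed by $j'\neq j$, so that the phase register ends up entangled with $\ket{v_j}$ and not with some superposition. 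Once these are in hand, the error propagation and probability amplification are standard.
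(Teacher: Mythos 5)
The paper itself does not prove this theorem---it is imported verbatim from Kerenidis--Prakash \cite{kerenidis2016quantum}---and your proposal is in essence a faithful reconstruction of that original proof: the isometries $P,Q$ built from the binary-tree structure with $P^\dagger Q=\bm{A}/\|\bm{A}\|_F$, the two-reflection walk $W=(2PP^\dagger-\bm{I})(2QQ^\dagger-\bm{I})$, Jordan's lemma giving two-dimensional rotation blocks with $\cos(\theta_j/2)=\sigma_j/\|\bm{A}\|_F$, phase estimation followed by coherent uncomputation, and Lipschitz error propagation with the $\theta\leftrightarrow 2\pi-\theta$ aliasing absorbed by the absolute value. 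Your outline is correct, and the delicate points you flag (polylog-time reflections via controlled rotations down the tree, the degenerate blocks $\sigma_j\in\{0,\|\bm{A}\|_F\}$, and the absence of cross-plane leakage of $Q\ket{v_j}$) are precisely the ones settled in the cited source, so there is nothing methodologically different to compare.
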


We see from Theorem~\ref{thm:df:qsve} that the runtime of QSVE depends on the Frobenius norm $\|\bm{A}\|_F$, rather than the sparsity $s(\bm{A})$ shown in HHL. This will also appear in our algorithm's runtime.

\section{Quantum Data Fitting Algorithm}

For a Hermitian matrix $\bm{F}\in \mathbb C^{N\times N}$ with the spectral decomposition $\bm{F}=\sum_i\lambda_i\bm{v}_i\bm{v}_i^\dag$, its singular value decomposition is given by $\bm{F}=\sum_i|\lambda_i|\bm{u}_i\bm{v}_i^\dag$, where the left singular vectors $\bm{u}_i$ are equal to $\pm\bm{v}_i$ depending on the signs of $\lambda_i$; i.e., $\bm{u}_i=-\bm{v}_i$ if $\lambda_i<0$, and $\bm{u}_i=\bm{v}_i$ otherwise.

Similar to~\cite{wossnig2018quantum}, QSVE in Theorem~\ref{thm:df:qsve} will also serve as a key subroutine of our algorithm. The difference is, however, we are going to use the following lemma to recover the sign of eigenvalues of a Hermitian matrix.

\begin{lem}\label{lem:qdf:recover_signs}
Let $\bm{F}\in \mathbb C^{N\times N}$ be a Hermitian matrix with the spectral decomposition $\bm{F}=\sum_i\lambda_i\bm{v}_i\bm{v}_i^\dag$. Let $\|\bm{F}\|_*=\max_{i\in [N]}\{|\lambda_i|\}$ be the spectral norm of $\bm{F}$, and $\bm{I}_N$ the $N$-by-$N$ identity matrix. For a precision parameter $\epsilon>0$, by performing QSVE algorithm on the matrix $\hat{\bm{F}}:=\bm{F}+\|\bm{F}\|_*\bm{I}_N$, we can transform  $\ket{\psi}=\sum_{j}\beta_j\ket{v_j}$ into  
$\sum_{j}\beta_j\ket{v_j}\ket{\overline{\lambda_j}}$ such that $|\overline{\lambda_j}-\lambda_j|\leq \epsilon$ for all $j\in[N]$ with probability at least $1-1/\mathrm{poly}(N)$ in time $O(\sqrt{N}\|\bm{F}\|_*\mathrm{polylog}(N)/\epsilon)$.
\end{lem}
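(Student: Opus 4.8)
The idea is that adding the shift $\|\bm{F}\|_*\bm{I}_N$ makes $\hat{\bm{F}}$ positive semidefinite, so its eigenvalues coincide with its singular values, and QSVE applied to $\hat{\bm{F}}$ directly returns estimates of those singular values; subtracting the known constant $\|\bm{F}\|_*$ then recovers the (signed) eigenvalues of $\bm{F}$. I would set up the argument as follows.

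First I would observe that $\hat{\bm{F}} = \bm{F} + \|\bm{F}\|_*\bm{I}_N$ has the spectral decomposition $\hat{\bm{F}} = \sum_i (\lambda_i + \|\bm{F}\|_*)\bm{v}_i\bm{v}_i^\dag$, sharing the same eigenvectors $\bm{v}_i$ as $\bm{F}$. Since $\lambda_i + \|\bm{F}\|_* \geq 0$ for every $i$ by definition of the spectral norm, $\hat{\bm{F}}$ is positive semidefinite, and hence its singular values are exactly $\hat{\sigma}_i = \lambda_i + \|\bm{F}\|_*$, with left and right singular vectors both equal to $\bm{v}_i$. (One subtlety: Theorem~\ref{thm:df:qsve} is stated for real matrices $\bm{A}\in\mathbb R^{m\times n}$, whereas $\bm{F}$ is complex Hermitian; I would invoke the standard real-embedding of a complex Hermitian matrix, or note that the binary-tree data structure stores $|A_{ij}|^2$ and QSVE extends verbatim, so this is not a real obstacle.)

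Next, I would apply Theorem~\ref{thm:df:qsve} to $\hat{\bm{F}}$ with precision $\epsilon$: it maps $\ket{\psi}=\sum_j \beta_j\ket{v_j}$ to $\sum_j \beta_j\ket{v_j}\ket{\overline{\hat{\sigma}_j}}$ with $|\overline{\hat{\sigma}_j}-\hat{\sigma}_j|\leq\epsilon$ for all $j$, with probability at least $1-1/\mathrm{poly}(N)$, in time $O(\|\hat{\bm{F}}\|_F\,\mathrm{polylog}(N)/\epsilon)$. Then I would define $\overline{\lambda_j} := \overline{\hat{\sigma}_j} - \|\bm{F}\|_*$ (a classical post-processing step on the eigenvalue register, implementable by a reversible subtraction circuit), which gives $|\overline{\lambda_j}-\lambda_j| = |\overline{\hat{\sigma}_j}-\hat{\sigma}_j|\leq\epsilon$, yielding the claimed state $\sum_j\beta_j\ket{v_j}\ket{\overline{\lambda_j}}$.

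Finally, for the runtime I would bound the Frobenius norm: $\|\hat{\bm{F}}\|_F^2 = \sum_i \hat{\sigma}_i^2 = \sum_i (\lambda_i+\|\bm{F}\|_*)^2 \leq \sum_i (2\|\bm{F}\|_*)^2 = 4N\|\bm{F}\|_*^2$, so $\|\hat{\bm{F}}\|_F \leq 2\sqrt{N}\|\bm{F}\|_*$, giving the stated $O(\sqrt{N}\|\bm{F}\|_*\,\mathrm{polylog}(N)/\epsilon)$ time. The only mildly delicate point is ensuring $\|\bm{F}\|_*$ is known (or well-enough estimated) so that the shifted matrix can actually be built in the data structure and the subtraction performed; I would remark that the spectral norm can be estimated to sufficient accuracy as a preprocessing step, or is available from the construction of $\bm{F}$. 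I expect this bookkeeping — the real-vs-complex embedding and the availability of $\|\bm{F}\|_*$ — to be the main thing to handle carefully, as the mathematical core is essentially immediate once the positive-semidefinite shift is in place.
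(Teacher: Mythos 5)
Your proposal is correct and follows essentially the same route as the paper's own proof: shift by $\|\bm{F}\|_*\bm{I}_N$ to make $\hat{\bm{F}}$ positive semidefinite so that its singular values equal its eigenvalues, run QSVE on $\hat{\bm{F}}$, subtract the known shift, and bound $\|\hat{\bm{F}}\|_F = O(\sqrt{N}\|\bm{F}\|_*)$ (your termwise bound $\lambda_i+\|\bm{F}\|_*\le 2\|\bm{F}\|_*$ is just a slightly more direct version of the paper's expansion via $\|\bm{F}\|_F\le\sqrt{N}\|\bm{F}\|_*$). Your added remarks on the real-versus-complex embedding and the availability of $\|\bm{F}\|_*$ are reasonable bookkeeping points that the paper leaves implicit.
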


\begin{proof}
The proof is quite straightforward. Since $\bm{F}$ has the spectral decomposition $\sum_i\lambda_i\bm{v}_i\bm{v}_i^\dag$,  $\hat{\bm{F}}=\sum_i\hat{\lambda}_i\bm{v}_i\bm{v}_i^\dag$, where $\hat{\lambda}_i=\lambda_i+\|\bm{F}\|_*$ for all $i\in[N]$. By the definition of $\|\bm{F}\|_*$, eigenvalues $\hat{\lambda}_i$ of $\hat{\bm{F}}$ are all non-negative, meaning that $\hat{\bm{F}}$ is a positive semi-definite matrix. Therefore, the singular value decomposition of $\hat{\bm{F}}$ is the same as its spectral decomposition.

By performing QSVE on $\hat{\bm{F}}$ with the precision parameter $\epsilon>0$, we obtain an estimation $\overline{\hat{\lambda}_i}$ of $\hat{\lambda}_i$ such that $|\overline{\hat{\lambda}_i}-\hat{\lambda}_i|\leq \epsilon$ for all $i\in[N]$, with probability at least $1-1/\mathrm{poly}(N)$ in time $O(\|\hat{\bm{F}}\|_F\mathrm{polylog}(N)/\epsilon)$. An estimation $\overline{{\lambda}_i}$ of $\lambda_i$ of the original matrix $\bm{F}$ is then obtained by subtracting $\|\bm{F}\|_*$ from  $\overline{\hat{\lambda}_i}$. Finally, the estimation error can be bounded the same as QSVE, because we have
\begin{align}
|\overline{\lambda_i}-\lambda_i|=|(\overline{\lambda_i}+\|\bm{F}\|_*)-(\lambda_i+\|\bm{F}\|_*)|=|\overline{\hat{\lambda}_i}-\hat{\lambda}_i|\leq \epsilon.
\end{align}

Now we consider the bound of $\|\hat{\bm{F}}\|_F$ to bound the time complexity.
\begin{align}
    \|\hat{\bm{F}}\|_F & =\sqrt{\sum_i(\hat{\lambda}_i)^2} =\sqrt{\sum_i\left(\lambda_i+ \|\bm{F}\|_*\right)^2} \nonumber\\
    &=\sqrt{\sum_i\lambda_i^2 +2\sum_i\lambda_i\|\bm{F}\|_*+N\|\bm{F}\|_*^2}  \nonumber\\ \label{eq:qdf:mean_lambda}
    &= \sqrt{\|\bm{F}\|_F^2+ \left(1+2\frac{\mathbb E (\lambda)}{\|\bm{F}\|_*}\right)N\|\bm{F}\|_*^2} \\
    &\le O(\sqrt{N}\|\bm{F}\|_*),\label{eq:qdf:F_norm_less_spectum}
\end{align}
where $\mathbb E (\lambda):=\frac{1}{N}\sum_i\lambda_i\in\left[-\|\bm{F}\|_*,\|\bm{F}\|_*\right]$ and Eq.(\ref{eq:qdf:F_norm_less_spectum})
follows from $\|\bm{F}\|_F\le \sqrt{N}\|\bm{F}\|_*$.
This completes the proof.
\end{proof}

With this lemma, we propose our quantum data fitting algorithm as in the following theorem:

\begin{thm}\label{thm:main_result}
Let $\bm{F}\in \mathbb C^{N\times N}$ be the non-sparse Hermitian matrix described in the least squares fitting problem, $\bm{F}=\sum_i\lambda_i\bm{v}_i\bm{v}_i^\dag$ its spectral decomposition, 
and $\kappa$ its condition number.
Assume that $\hat{\bm{F}}=\bm{F}+\|\bm{F}\|_*\bm{I}_N$ is stored in the classical binary tree data structure as in \cite{kerenidis2016quantum}.  For a precision parameter $\epsilon>0$, Algorithm~\ref{alg:qdf:qdfa_for_non_sm} outputs a quantum state \ket{w} such that $\|\ket{w}-\ket{{w}^*}\|\leq\epsilon$ in time $O\left(\kappa^2\sqrt{N}\mathrm{polylog}(N)/(\epsilon\log\kappa)\right)$, where $\ket{{w}^*}$ denotes the quantum state proportional to the optimal fit parameter $\bm{w}^*$ in Eq.~(\ref{eq:df:solution_original}).
\end{thm}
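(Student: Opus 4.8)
\emph{Proof proposal.} The plan is to follow the template of HHL/WZP-type algorithms, but with Lemma~\ref{lem:qdf:recover_signs} supplying both the magnitudes \emph{and} the signs of the eigenvalues, and with the inversion filter $1/\lambda$ replaced by the ridge-regression filter $g(\lambda):=\lambda/(\lambda^2+\gamma)$. First I would rewrite the target: since $\bm F$ is Hermitian with $\bm F=\sum_i\lambda_i\bm v_i\bm v_i^\dag$, the solution~\eqref{eq:df:solution_original} becomes $\bm w^*=\sum_i g(\lambda_i)\,\langle\bm v_i|\bm y\rangle\,\bm v_i$, so $\ket{w^*}$ is the normalization of $\sum_i\beta_i\,g(\lambda_i)\ket{v_i}$, where $\ket{y}=\sum_i\beta_i\ket{v_i}$ with $\beta_i=\langle v_i|y\rangle$. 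Since $\gamma>0$, the matrix $\bm F^\dag\bm F+\gamma\bm I$ is positive definite and $\bm w^*$ is always well defined; the point is that the \emph{sign} of $\lambda_i$ (not merely $|\lambda_i|$) enters $g$, which is exactly why QSVE alone does not suffice and Lemma~\ref{lem:qdf:recover_signs} is invoked.

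Next I would run Algorithm~\ref{alg:qdf:qdfa_for_non_sm}: (i) prepare $\ket{y}=\sum_i\beta_i\ket{v_i}$ (assuming, as usual, efficient access to $\ket{y}$) in time $\mathrm{polylog}(N)$; (ii) apply Lemma~\ref{lem:qdf:recover_signs} to $\hat{\bm F}=\bm F+\|\bm F\|_*\bm I_N$ with eigenvalue precision $\epsilon'$, producing $\sum_i\beta_i\ket{v_i}\ket{\overline{\lambda_i}}$ with $|\overline{\lambda_i}-\lambda_i|\le\epsilon'$ in time $O\!\left(\sqrt{N}\,\|\bm F\|_*\,\mathrm{polylog}(N)/\epsilon'\right)$; (iii) adjoin an ancilla and perform the controlled rotation $\ket{\overline{\lambda_i}}\ket{0}\mapsto\ket{\overline{\lambda_i}}\!\left(C\,g(\overline{\lambda_i})\ket{1}+\sqrt{1-C^2 g(\overline{\lambda_i})^2}\,\ket{0}\right)$, where $C>0$ is a fixed scaling computed from $\gamma$, $\|\bm F\|_*$ and the bound $|\lambda_i|\ge\|\bm F\|_*/\kappa$ so that $C|g|\le 1$ on the spectrum while $C$ is as large as possible; (iv) uncompute the eigenvalue register by running the inverse of the procedure in Lemma~\ref{lem:qdf:recover_signs}; (v) measure the ancilla and post-select on $\ket{1}$, leaving a state proportional to $\sum_i\beta_i\,g(\overline{\lambda_i})\ket{v_i}$; (vi) apply amplitude amplification to raise the post-selection probability $p=\sum_i|\beta_i|^2\,C^2\,g(\overline{\lambda_i})^2$ to a constant, at multiplicative cost $O(1/\sqrt{p})$.

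For correctness I would bound $\|\,\ket{w}-\ket{w^*}\,\|$ by two contributions. The $1/\mathrm{poly}(N)$ failure probability of Lemma~\ref{lem:qdf:recover_signs} is negligible. The dominant term is the replacement of $\lambda_i$ by $\overline{\lambda_i}$ inside $g$: if $|a_i-b_i|\le\delta|b_i|$ for all $i$ then $\|a-b\|\le\delta\|b\|$ for the unnormalized vectors, hence a difference $O(\delta)$ between the normalized states, so it suffices that the per-component relative error $\delta=\max_i|g(\overline{\lambda_i})-g(\lambda_i)|/|g(\lambda_i)|$ be $O(\epsilon)$. Using $\bigl|g'(\lambda)/g(\lambda)\bigr|=|\gamma-\lambda^2|/\bigl(|\lambda|(\lambda^2+\gamma)\bigr)\le 1/|\lambda|\le\kappa/\|\bm F\|_*$, uniformly in $\gamma$ (because $|\gamma-\lambda^2|\le\gamma+\lambda^2$), it suffices to take $\epsilon'=\Theta(\epsilon\,\|\bm F\|_*/\kappa)$; with this $\epsilon'$ no estimated eigenvalue changes sign, so $g(\overline{\lambda_i})$ is well-behaved.

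Finally, the runtime is the product of the QSVE cost in step (ii) and the amplification overhead in step (vi): $O\!\left(\sqrt{N}\,\|\bm F\|_*\,\mathrm{polylog}(N)/\epsilon'\right)\cdot O(1/\sqrt{p})=O\!\left(\kappa\sqrt{N}\,\mathrm{polylog}(N)/\epsilon\right)\cdot O(1/\sqrt{p})$, the $\|\bm F\|_*$ cancelling. The remaining and main obstacle is a sharp lower bound on $p$. From $p\ge C^2\min_{|\lambda|\in[\|\bm F\|_*/\kappa,\,\|\bm F\|_*]}g(\lambda)^2=\bigl(\min|g|/\max|g|\bigr)^2$ and the single-bump shape of $g$ (increasing on $[0,\sqrt{\gamma}]$, decreasing afterwards, with $\max|g|=1/(2\sqrt{\gamma})$ when $\sqrt{\gamma}$ lies in the spectral range), this ratio is governed by the hyper-parameter $\gamma$; for $\gamma$ of order $\|\bm F\|_*^2\log^2\kappa/\kappa^2$ (a regularization comparable, up to polylog factors, to the smallest eigenvalue squared — the choice adopted in Algorithm~\ref{alg:qdf:qdfa_for_non_sm}, which also keeps the spectral shift of Lemma~\ref{lem:qdf:recover_signs} well conditioned) one gets $1/\sqrt{p}=O(\kappa/\log\kappa)$, yielding the claimed $O\!\left(\kappa^2\sqrt{N}\,\mathrm{polylog}(N)/(\epsilon\log\kappa)\right)$. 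Besides this bookkeeping (keeping $\gamma$, $C$ and $\epsilon'$ mutually consistent), the other things to check are that the controlled rotation in step (iii) is implementable to the required accuracy and that the sign information from Lemma~\ref{lem:qdf:recover_signs} makes the left/right singular-vector convention of QSVE consistent with the spectral decomposition $\bm F=\sum_i\lambda_i\bm v_i\bm v_i^\dag$.
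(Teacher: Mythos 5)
Your algorithmic outline, error analysis, and QSVE-cost accounting all match the paper: the controlled rotation implements $h(\lambda)=\sqrt{\gamma}\lambda/(\lambda^2+\gamma)$, the eigenvalue precision $\epsilon'=\Theta(\epsilon\|\bm{F}\|_*/\kappa)$ yields a per-component relative error $O(\epsilon)$ in $h$ (the paper proves this as Lemma~\ref{lem:qdf:h_bar_minus_h} by direct algebra rather than your logarithmic-derivative bound, but the content is the same, modulo supplying the mean-value step over the interval $[\lambda,\overline{\lambda}]$), and the QSVE call costs $O(\kappa\sqrt{N}\mathrm{polylog}(N)/\epsilon)$ exactly as in Eq.~(\ref{eq:qsve_runtime}).

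The genuine gap is in how you obtain the $1/\log\kappa$ saving. You assert that Algorithm~\ref{alg:qdf:qdfa_for_non_sm} ``adopts'' a fixed $\gamma=\Theta\!\left(\|\bm{F}\|_*^2\log^2\kappa/\kappa^2\right)$ and deduce $1/\sqrt{p}=O(\kappa/\log\kappa)$ from the single-bump shape of $h$. That arithmetic is fine for that particular $\gamma$, but it is not what the algorithm does: Line~\ref{alg:qdf:line_1} draws $\gamma$ log-uniformly from the whole range $[\|\bm{F}\|_*^2/\kappa^2,\|\bm{F}\|_*^2]$, because $\gamma$ is a hyper-parameter the algorithm is not free to pin to a runtime-convenient value. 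For a fixed draw the amplitude-amplification overhead is $O\!\left(\max\{\sqrt{\gamma}\kappa/\|\bm{F}\|_*,\ \|\bm{F}\|_*/\sqrt{\gamma}\}\right)$, which is $\Theta(\kappa)$ at either endpoint of the range, so your argument only establishes $O(\kappa^2\sqrt{N}\mathrm{polylog}(N)/\epsilon)$ for the actual algorithm. The paper closes this by averaging the overhead over the log-uniform choice, evaluating
\begin{align}
  \frac{1}{\ln\kappa^2}\int_{\ln(\|\bm{F}\|_*^2/\kappa^2)}^{\ln\|\bm{F}\|_*^2}\max\Bigl\{\frac{\kappa\sqrt{\mathrm{e}^t}}{\|\bm{F}\|_*},\frac{\|\bm{F}\|_*}{\sqrt{\mathrm{e}^t}}\Bigr\}\,dt=O(\kappa/\log\kappa),
\end{align}
i.e., the $\log\kappa$ is an expected-cost statement over the randomized hyper-parameter, not a worst-case bound for a specially chosen $\gamma$. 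To repair your proof you would either have to reproduce this averaging argument, or explicitly change the algorithm (and the theorem's meaning) to fix $\gamma$, which is not the statement being proved.
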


\begin{algorithm}[t]
\raggedright
\caption{Quantum Data Fitting Algorithm for Non-Sparse Matrices} \label{alg:qdf:qdfa_for_non_sm}
\KwIn{$\hat{\bm{F}}=\bm{F}+\|\bm{F}\|_*\bm{I}_N\in\mathbb C^{N\times N}$ and $\bm{y}\in\mathbb C^N$ stored in the classical binary tree data structure required by QSVE~\cite{kerenidis2016quantum}, the condition number $\kappa$ (or an upper bound of it) of $\bm{F}$, and precision $\epsilon$.}
\KwOut{A quantum state $\ket{w}$ which is proportional to the optimal fit parameter $\bm{w}^*$ with the bounding error $\epsilon$ as measured by the Euclidean distance.}
\begin{enumerate}
\item Generate a value of hyper-parameter $\gamma\in [\frac{\|\bm{F}\|_*^2}{\kappa^2},\|\bm{F}\|_*^2] $ according to the log-uniform distribution. \label{alg:qdf:line_1}
\item Create the quantum state $\ket{y}=\sum_{i}\beta_i\ket{v_i}$ which is proportional to $\bm{y}$, with $\bm{v}_i$'s being the eigenvectors of $\bm{F}$.
\item Perform the QSVE subroutine for matrix $\hat{\bm{F}}$ with precision $\delta=\frac{\|\bm{F}\|_*}{4\kappa}\epsilon$ to obtain the state
$\sum_{i}\beta_i\ket{v_i}\ket{\overline{\hat{\lambda}_i}}$.
\item Add an auxiliary register and apply a rotation conditioned on the second register, and uncompute the QSVE subroutine to erase the second register, obtaining
\begin{align}\label{eq:qdf:rotation}
\sum_{i}\beta_i\ket{v_i}\left(\frac{C\overline{\lambda_i}}{\overline{\lambda_i}^2+\gamma}\ket{0}+\sqrt{1-\left(\frac{C\overline{\lambda_i}}{\overline{\lambda_i}^2+\gamma}\right)^2}\ket{1}\right),
\end{align}
where $\overline{\lambda_i}=\overline{\hat{\lambda}_i}-\|\bm{F}\|_*$ is the estimation of the eigenvalue $\lambda_i$ of $\bm{F}$ and $C=C_0\sqrt{\gamma}$ ($0< C_0<2$) is a constant.
\item Post-select on the auxiliary register being in state $\ket{0}$.
\end{enumerate}
\end{algorithm}

\begin{proof}
The proof mainly contains correctness analysis and complexity analysis. First we give the proof of correctness, i.e., $\|\ket{w}-\ket{w^*}\|\leq\epsilon$. 

From Algorithm~\ref{alg:qdf:qdfa_for_non_sm}, we observe, after post-selection, that
\begin{align}
    \ket{w}=\frac{\sum_{i}\beta_ih(\overline{\lambda_i})\ket{v_i}\ket{0}}{\sqrt{\sum_{i}|\beta_i|^2\left(h(\overline{\lambda_i})\right)^2}}=\frac{\sum_{i}\beta_ih(\overline{\lambda_i})\ket{v_i}\ket{0}}{\sqrt{\overline{p}}},
\end{align}
where $\overline{p}:= \sum_{i}|\beta_i|^2\left(h(\overline{\lambda_i})\right)^2$ and $h$ is defined as follows:
\begin{equation}\label{eq:qdf:h_define}
    h(\lambda):=\frac{C\lambda}{\lambda^2+\gamma}=\frac{\sqrt{\gamma}\lambda}{\lambda^2+\gamma}.
\end{equation}
Here, we take $C=C_0\sqrt{\gamma}=\sqrt{\gamma}$ as an example (Other values of $C_0\in(0,2)$ are similar).
The ideal state $\ket{w^*}$ should be 
\begin{align}
    \ket{w^*}=\frac{\sum_{i}\beta_ih({\lambda_i})\ket{v_i}\ket{0}}{\sqrt{\sum_{i}|\beta_i|^2\left(h({\lambda_i})\right)^2}}=\frac{\sum_{i}\beta_ih({\lambda_i})\ket{v_i}\ket{0}}{\sqrt{p}},
\end{align}
where $p:=  \sum_{i}|\beta_i|^2\left(h({\lambda_i})\right)^2$. Therefore, we have
\begin{align}
    &\|\ket{w}-\ket{w^*}\|^2  \nonumber\\
    &=  \|\sum_{i}\beta_i\left(\frac{h(\overline{\lambda_i})}{\sqrt{\overline{p}}} - \frac{h({\lambda_i})}{\sqrt{p}}\right)\ket{v_i}\ket{0}\|^2 \\
    &=  \sum_{i}|\beta_i|^2 \left(\frac{h({\lambda_i})}{\sqrt{p}}\right)^2 \left(\frac{h(\overline{\lambda_i})}{h({\lambda_i})}\cdot \frac{\sqrt{p}}{\sqrt{\overline{p}}} - 1 \right)^2.   \label{eq:qdf:w_w*}
\end{align}

\begin{figure*}[t]
\centering
\subfigure[small $\gamma$]{\label{fig:h_lambda1}
 \includegraphics[width=0.32\textwidth]{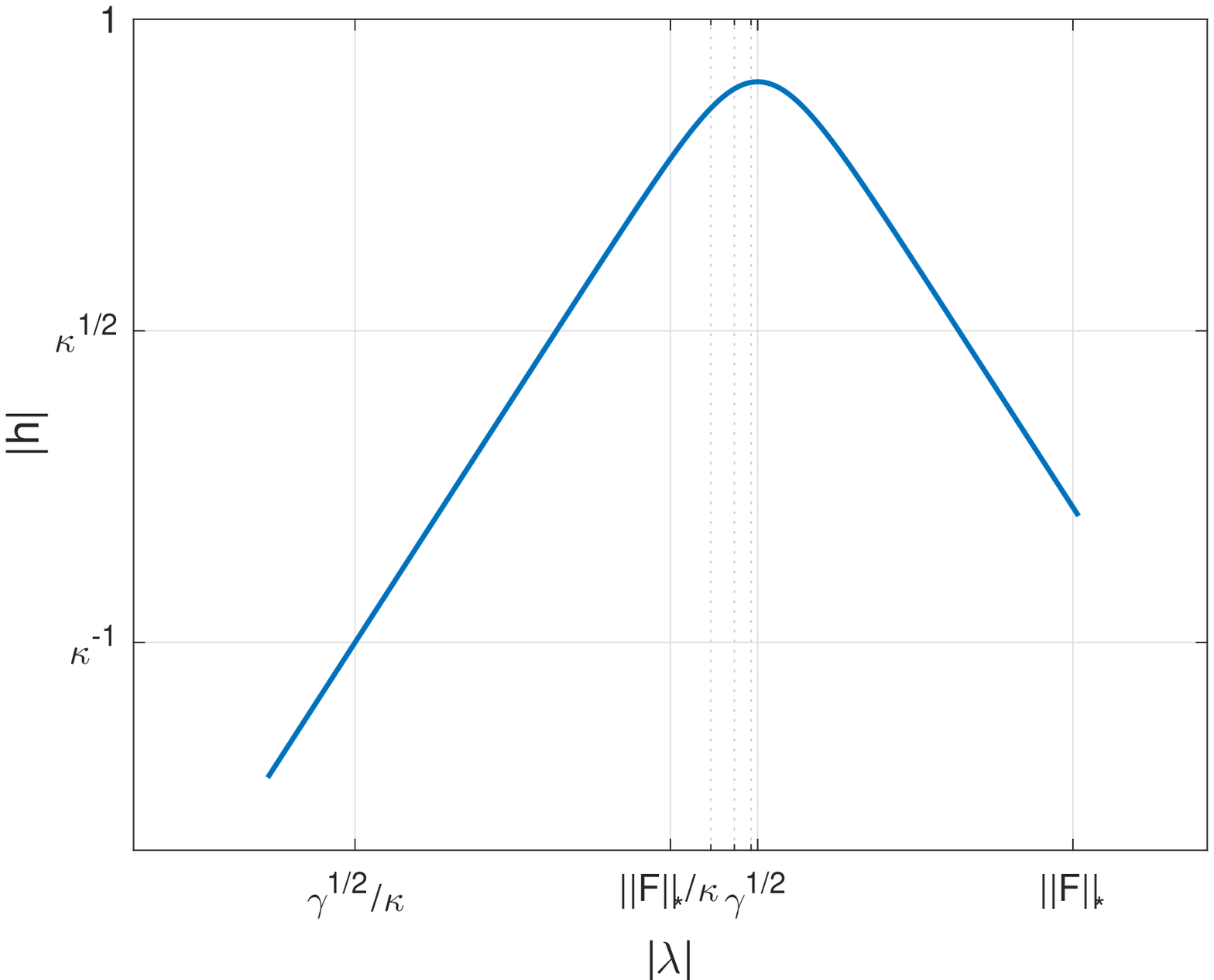}}
 \subfigure[median $\gamma$]{\label{fig:h_lambda2}
 \includegraphics[width=0.32\textwidth]{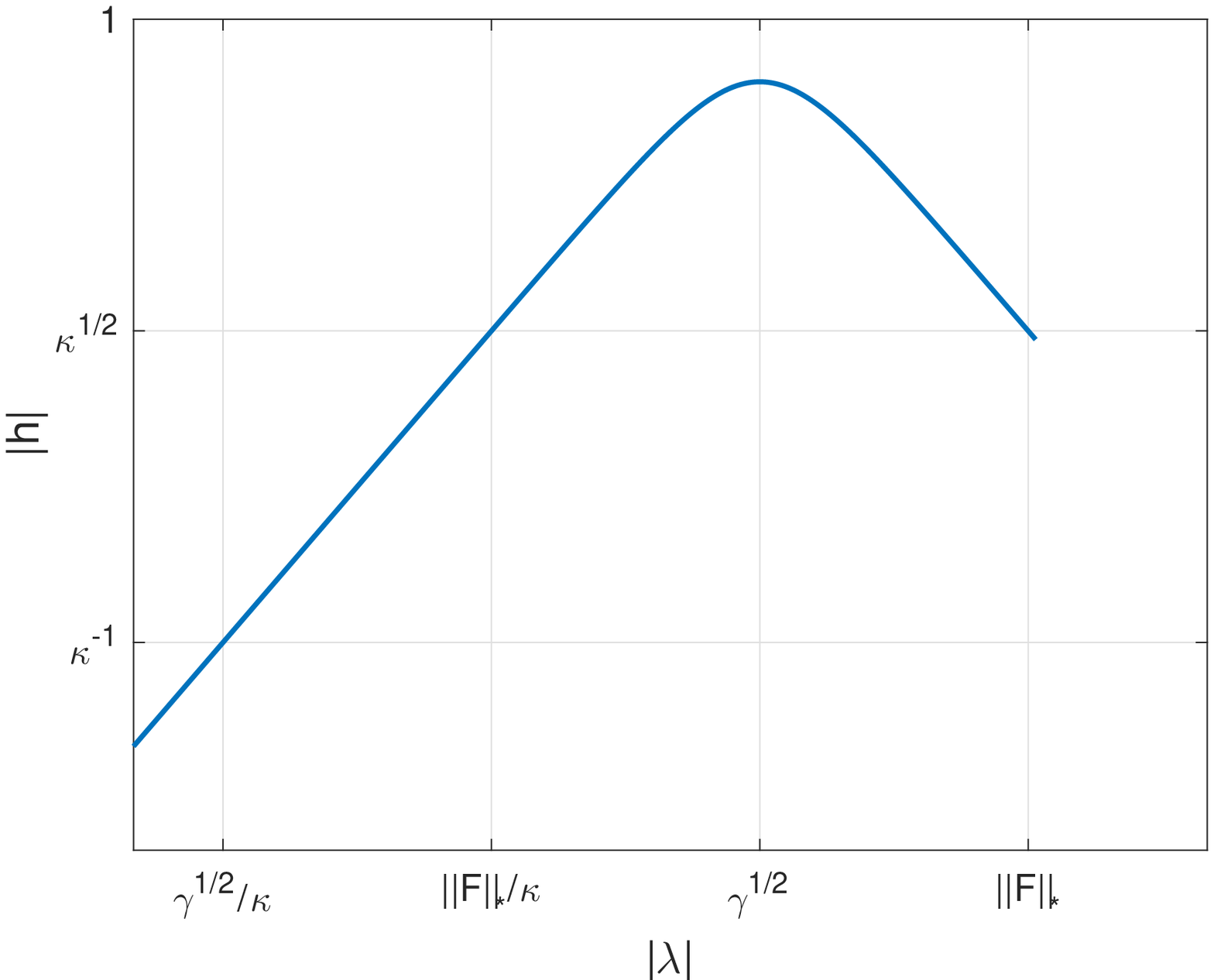}}
 \subfigure[large $\gamma$]{\label{fig:h_lambda3}
 \includegraphics[width=0.32\textwidth]{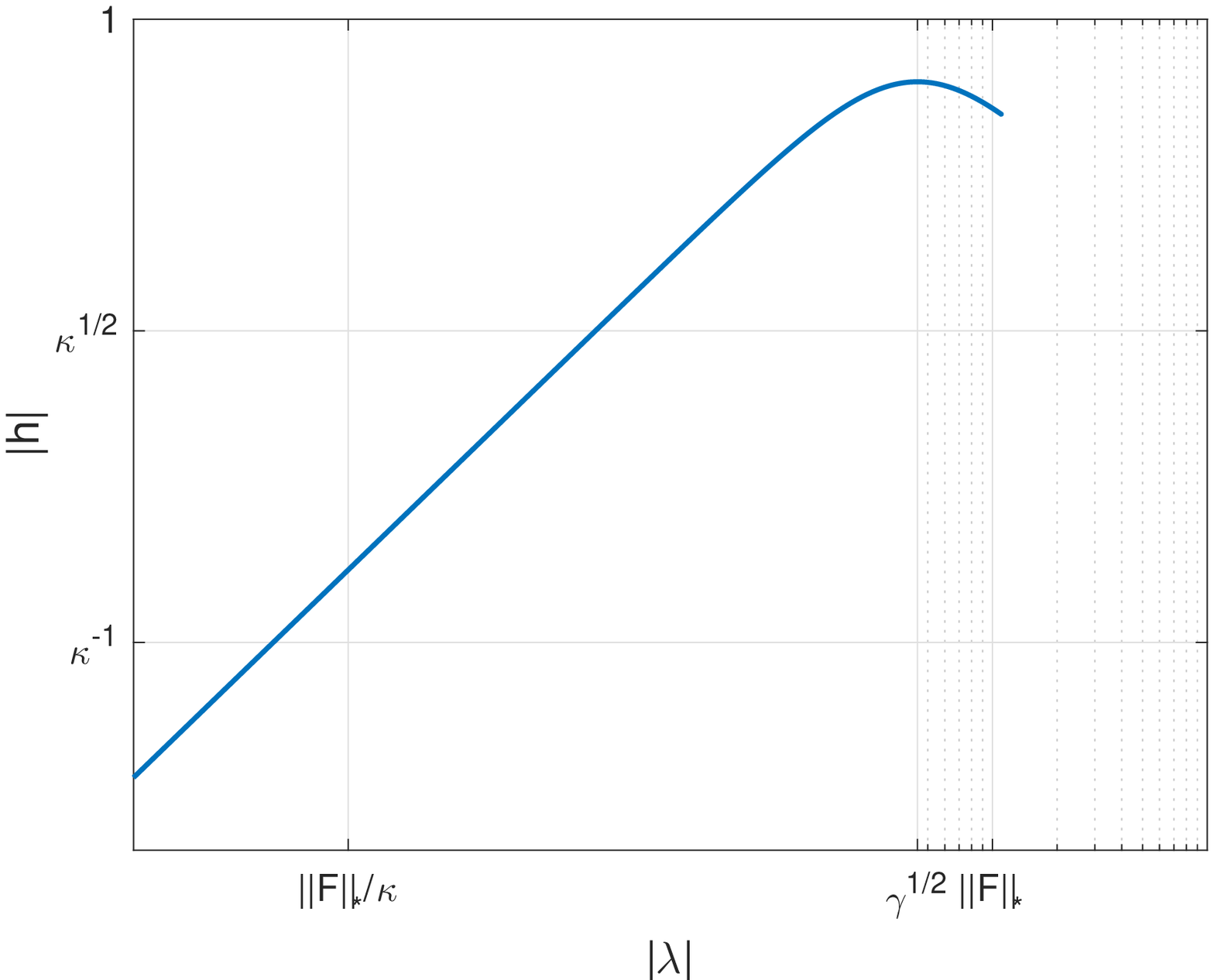}}
  \caption{$|h|$ versus $|\lambda|$ with different $\gamma$.}
  \label{fig:h_lambda}
 \end{figure*}

We now bound $\frac{h(\overline{\lambda_i})}{h({\lambda_i})}$ and $\frac{\sqrt{p}}{\sqrt{\overline{p}}}$ via the following lemma:
\begin{lem}\label{lem:qdf:h_bar_minus_h}
Let $h(\lambda)$ be defined as in (\ref{eq:qdf:h_define}).
Then
\begin{align}
 \left|h(\overline{\lambda})-h({\lambda})\right|\leq  \frac{1}{3}\epsilon\left|h(\lambda)\right|.
\end{align}
\end{lem}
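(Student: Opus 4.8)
The plan is to exploit the explicit rational form $h(\lambda)=\sqrt{\gamma}\lambda/(\lambda^2+\gamma)$ and to estimate the \emph{relative} deviation $\bigl(h(\overline{\lambda})-h(\lambda)\bigr)/h(\lambda)$ directly, rather than bounding $h(\overline{\lambda})-h(\lambda)$ and then dividing by a worst-case lower bound on $|h(\lambda)|$ --- the latter route loses a factor of order $\kappa$ and is hopeless. Throughout, $\overline{\lambda}$ denotes the eigenvalue estimate produced in Step~3 of Algorithm~\ref{alg:qdf:qdfa_for_non_sm}, so that $|\overline{\lambda}-\lambda|\le\delta$ with $\delta=\tfrac{\|\bm{F}\|_*}{4\kappa}\epsilon$.

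First I would clear denominators: a one-line computation shows that the numerator of $h(\overline{\lambda})-h(\lambda)$ factors as $(\overline{\lambda}-\lambda)(\gamma-\lambda\overline{\lambda})$ times $\sqrt{\gamma}$, which yields the identity
\begin{align}
\frac{h(\overline{\lambda})-h(\lambda)}{h(\lambda)}=\frac{(\overline{\lambda}-\lambda)\,(\gamma-\lambda\overline{\lambda})}{\lambda\,(\overline{\lambda}^2+\gamma)},
\end{align}
valid since $\lambda\neq0$ and $\gamma>0$. The merit of this identity is that it exhibits the relative error as a product of two factors, each of which can be controlled against a \emph{different} quantity.

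Then I would bound the two factors separately. For $|\overline{\lambda}-\lambda|/|\lambda|$ I would use that $\bm{F}$ is Hermitian with condition number $\kappa$, so every eigenvalue satisfies $|\lambda|\ge\|\bm{F}\|_*/\kappa$; hence $|\overline{\lambda}-\lambda|/|\lambda|\le\delta/|\lambda|\le\epsilon/4$. For $|\gamma-\lambda\overline{\lambda}|/(\overline{\lambda}^2+\gamma)$ I would write $\gamma-\lambda\overline{\lambda}=(\gamma-\overline{\lambda}^2)-\overline{\lambda}(\lambda-\overline{\lambda})$, use $|\gamma-\overline{\lambda}^2|\le\gamma+\overline{\lambda}^2$, and invoke the AM--GM inequality $\overline{\lambda}^2+\gamma\ge2\sqrt{\gamma}\,|\overline{\lambda}|$, obtaining $|\gamma-\lambda\overline{\lambda}|/(\overline{\lambda}^2+\gamma)\le 1+\delta/(2\sqrt{\gamma})\le 1+\epsilon/8$; the last step uses $\delta\le\tfrac{1}{4}\sqrt{\gamma}\,\epsilon$, which holds because $\gamma\ge\|\bm{F}\|_*^2/\kappa^2$ by Step~1 of the algorithm. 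Multiplying the two estimates,
\begin{align}
\left|\frac{h(\overline{\lambda})-h(\lambda)}{h(\lambda)}\right|\le\frac{\epsilon}{4}\Bigl(1+\frac{\epsilon}{8}\Bigr)\le\frac{1}{3}\epsilon,
\end{align}
where the last inequality holds whenever $\epsilon\le\tfrac{8}{3}$ (in particular for any precision $\epsilon<1$); multiplying through by $|h(\lambda)|$ gives the claim.

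I expect the only delicate point --- the main obstacle --- to be the numerical constant $\tfrac{1}{3}$. A mean-value-theorem bound $|h(\overline{\lambda})-h(\lambda)|\le(\sup|h'|)\,\delta$ followed by division by $\min|h(\lambda)|$ loses a factor $\Theta(\kappa)$, and even a careful triangle-inequality expansion of the ratio $h(\overline{\lambda})/h(\lambda)$ only reaches roughly $\tfrac{1}{2}\epsilon$. What makes $\tfrac{1}{3}\epsilon$ attainable is the exact factorization of the numerator above, together with the observation that the two resulting factors are controlled by the two \emph{separate} lower bounds $|\lambda|\ge\|\bm{F}\|_*/\kappa$ and $\sqrt{\gamma}\ge\|\bm{F}\|_*/\kappa$; a single uniform bound applied to both does not suffice.
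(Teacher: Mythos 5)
Your proof is correct, and its core coincides with the paper's: both start from the exact factorization $h(\overline{\lambda})-h(\lambda)=\sqrt{\gamma}\,(\overline{\lambda}-\lambda)(\gamma-\overline{\lambda}\lambda)/\bigl((\overline{\lambda}^2+\gamma)(\lambda^2+\gamma)\bigr)$ and control the \emph{relative} error against $h(\lambda)$, with the factor $|\overline{\lambda}-\lambda|/|\lambda|\le\epsilon/4$ coming from $|\lambda|\ge\|\bm{F}\|_*/\kappa$ and $\delta=\tfrac{\|\bm{F}\|_*}{4\kappa}\epsilon$. Where you diverge is in the second factor $|\gamma-\overline{\lambda}\lambda|/(\overline{\lambda}^2+\gamma)$: the paper bounds it by $\tfrac{1+\epsilon/4}{1-\epsilon/4}\le\tfrac{4}{3}$ using only the two-sided estimate $(1-\tfrac{\epsilon}{4})|\lambda|\le|\overline{\lambda}|\le(1+\tfrac{\epsilon}{4})|\lambda|$, so its argument is valid for every $\gamma>0$ but needs $\epsilon$ small enough (roughly $\epsilon\le 4/7$) for the constant $4/3$; you instead split $\gamma-\overline{\lambda}\lambda=(\gamma-\overline{\lambda}^2)-\overline{\lambda}(\lambda-\overline{\lambda})$ and bound the factor by $1+\delta/(2\sqrt{\gamma})\le 1+\epsilon/8$ via AM--GM together with the hyper-parameter range $\gamma\ge\|\bm{F}\|_*^2/\kappa^2$ from Step~\ref{alg:qdf:line_1} of the algorithm. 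Your route yields a marginally tighter constant, tolerates larger $\epsilon$, and avoids the paper's somewhat informal step of replacing $\overline{\lambda}\lambda$ by $(1+\tfrac{\epsilon}{4})\lambda^2$ inside the absolute value; the price is that your argument genuinely uses the lower end of the $\gamma$-range, whereas the paper's proof shows the same bound for arbitrary $\gamma>0$. Consequently, your closing claim that both lower bounds $|\lambda|\ge\|\bm{F}\|_*/\kappa$ and $\sqrt{\gamma}\ge\|\bm{F}\|_*/\kappa$ are indispensable is not quite accurate: the paper reaches the constant $\tfrac{1}{3}$ using the eigenvalue bound alone.
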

\begin{proof}
From the definition of $h$ and the fact that $|\overline{\lambda_i}-\lambda_i|\leq\delta = \frac{\|\bm{F}\|_*}{4\kappa}\epsilon$ and $|\lambda_i|\ge \frac{\|\bm{F}\|_*}{\kappa}$ for all $i$, we have
\begin{align}
    &\left|h(\overline{\lambda})-h({\lambda})\right|  \nonumber\\
    &=\sqrt{\gamma}\left|\frac{ \overline{\lambda}}{\overline{\lambda}^2 +\gamma}-\frac{\lambda}{\lambda^2+ \gamma}\right| = \sqrt{\gamma}\frac{\left|(\overline{\lambda}{\lambda} -\gamma)(\overline{\lambda}-{\lambda})\right|}{(\overline{\lambda}^2 +\gamma)({\lambda}^2+ \gamma)}  \nonumber\\\label{eq:qdf:lambda_bar_range}
    &\le \sqrt{\gamma}\frac{\left|({(1+\frac{\epsilon}{4})\lambda^2} -\gamma)(\overline{\lambda}-{\lambda})\right|}{((1-\frac{\epsilon}{4}){\lambda}^2 +\gamma)({\lambda}^2+ \gamma)}  \\ \label{eq:qdf:estimate_constant}
    &\le \frac{4}{3}\cdot\frac{1}{|\lambda|}\cdot\left|\overline{\lambda}-{\lambda}\right|\cdot\frac{\sqrt{\gamma}|\lambda|}{\lambda^2+ \gamma} \\
    &\le \frac{4}{3}\cdot\frac{\kappa}{\|\bm{F}\|_*}\cdot\delta\cdot\left|h(\lambda)\right| \\
    &=\frac{1}{3}\epsilon\left|h(\lambda)\right|,
\end{align}
where Eq.~(\ref{eq:qdf:lambda_bar_range}) follows from $(1-\frac{\epsilon}{4})|\lambda|\le|\lambda|-\delta\le|\overline{\lambda}|\le|\lambda|+\delta\le(1+\frac{\epsilon}{4})|\lambda|$ and Eq.~(\ref{eq:qdf:estimate_constant}) from $\frac{1+\epsilon/4}{1-\epsilon/4}\le 4/3$. This completes the proof of Lemma~\ref{lem:qdf:h_bar_minus_h}.
\end{proof}

From Lemma~\ref{lem:qdf:h_bar_minus_h}, we can obtain that for all $i$
\begin{align}
     \left|\frac{h(\overline{\lambda_i})}{h({\lambda_i})} -1\right| \le \frac{1}{3}\epsilon.
\end{align}
Thus
\begin{align} \label{eq:qdf:h_h}
   1-\frac{1}{3}\epsilon\le  \frac{h(\overline{\lambda_i})}{h({\lambda_i})} \le 1+\frac{1}{3}\epsilon.
\end{align}
And then 
\begin{align} \label{eq:qdf:p_p}
    \frac{1}{1+\frac{1}{3}\epsilon} \le \frac{\sqrt{p}}{\sqrt{\overline{p}}}= \sqrt{\frac{\sum_{i}|\beta_i|^2\left(h({\lambda_i})\right)^2}{\sum_{i}|\beta_i|^2\left(h(\overline{\lambda_i})\right)^2}} \le \frac{1}{1-\frac{1}{3}\epsilon}.
\end{align}
By substituting (\ref{eq:qdf:h_h}) and (\ref{eq:qdf:p_p}) into (\ref{eq:qdf:w_w*}), we have 
\begin{align}
    \|\ket{w}-\ket{w^*}\| &\le \max\left\{1-\frac{1-\frac{1}{3}\epsilon}{1+\frac{1}{3}\epsilon}, \frac{1+\frac{1}{3}\epsilon}{1-\frac{1}{3}\epsilon}-1\right\} \nonumber \\
    &=\frac{\frac{2}{3}\epsilon}{1-\frac{1}{3}\epsilon} \le \epsilon.
\end{align}

Next we give the proof of the time complexity. 
From Lemma~\ref{lem:qdf:recover_signs}, we know that in Algorithm~\ref{alg:qdf:qdfa_for_non_sm}, the QSVE subroutine runs in time
\begin{align}\label{eq:qsve_runtime}
    O(\sqrt{N}\|\bm{F}\|_*\mathrm{polylog}(N)/\delta)=O(\kappa\sqrt{N}\mathrm{polylog}(N)/\epsilon).
\end{align}
On the other hand, we consider the success probability of the post-selection process. In order to bound the maximal number of iterations, we need to compute the minimum of the rotation function $h(\lambda)$ which is related to the hyper-parameter $\gamma$. 
The image of $|h|$ as a function of $\gamma$ is illustrated in Figure~\ref{fig:h_lambda} \footnote{In general, we take $\gamma\in [\frac{\|\bm{F}\|_*^2}{\kappa^2},\|\bm{F}\|_*^2]$. This is reasonable in machine learning area because too small values of $\gamma$ lead to a negligible effect of regularization while too large values of $\gamma$ result in the loss of useful information of the original problems, i.e., the so-called under-fitting \cite{svergun1992determination}.}, from which we see  that, for $|\lambda|\in\left[\frac{\|\bm{F}\|_*}{\kappa},\|\bm{F}\|_*\right]$, $\min |h|$ is given by
\begin{align}\label{eq:qdf:h_gamma}
\renewcommand*{\arraystretch}{2}
   \left\{\begin{array}{ll}
    h(\frac{\|\bm{F}\|_*}{\kappa}) =\frac{\sqrt{\gamma} \frac{\|\bm{F}\|_*}{\kappa}}{\frac{\|\bm{F}\|_*^2}{\kappa^2}+\gamma}  \geq \frac{\|\bm{F}\|_*}{2\sqrt{\gamma}\kappa}   \text{,} & \text{if }   \frac{\|\bm{F}\|_*^2}{\kappa}\le \gamma \le \|\bm{F}\|_*^2\text{;}   \\
    h(\|\bm{F}\|_*)  = \frac{\sqrt{\gamma}\|\bm{F}\|_*}{\|\bm{F}\|_*^2 +\gamma} \geq \frac{\sqrt{\gamma}}{2\|\bm{F}\|_*}\text{,} &
    \text{if } \frac{\|\bm{F}\|_*^2}{\kappa^2} \le \gamma < \frac{\|\bm{F}\|_*^2}{\kappa} \text{.}
    \end{array}
    \right.
\end{align}
Hence, using amplitude amplification \cite{brassard2002quantum}, the number of iterations could be bounded as $O(\max\{\frac{\sqrt{\gamma}\kappa}{\|\bm{F}\|_*},\frac{\|\bm{F}\|_*}{\sqrt{\gamma}} \})$.

Furthermore, from the experience of machine learning,  $\gamma$ is usually taken  {in a logarithmic scale}, e.g., 0.01, 0.1, 1, $\ldots$ \cite{montavon1998tricks}. Thus we take $\gamma$ randomly according to a log-uniform distribution in its domain (Line~\ref{alg:qdf:line_1} of Algorithm~\ref{alg:qdf:qdfa_for_non_sm}). We estimate the number of iterations as
\begin{align}
  \frac{1}{\ln\|\bm{F}\|_*^2- \ln\frac{\|\bm{F}\|_*^2}{\kappa^2}} \int_{\ln\frac{\|\bm{F}\|_*^2}{\kappa^2}}^{\ln\|\bm{F}\|_*^2} &\max\{\frac{\kappa\sqrt{\mathrm{e}^t}}{\|\bm{F}\|_*},\frac{\|\bm{F}\|_*}{\sqrt{\mathrm{e}^t}} \} dt  \nonumber\\
  &= O(\kappa/\log\kappa),
\end{align}
where $t=\ln\gamma$ obeys a uniform distribution.
Combining with (\ref{eq:qsve_runtime}), the totoal time complexity is $O\left(\kappa^2\sqrt{N}\mathrm{polylog}(N)/(\epsilon\log\kappa)\right)$. This completes the proof of Theorem~\ref{thm:main_result}.
\end{proof}

\section{Further Discussions and Conclusions}

In this paper, we proposed a quantum data fitting algorithm for regularized least squares fitting problem with non-sparse matrices, which achieves a runtime of $O(\kappa^2\sqrt{N}\mathrm{polylog}(N)/(\epsilon\log\kappa))$, where the term $\log\kappa$ is due to the random choice of hyper-parameter $\gamma$ according to the log-uniform distribution in Algorithm~\ref{alg:qdf:qdfa_for_non_sm}.
As the hyper-parameter $\gamma$ is usually set empirically in machine learning, we let our algorithm generate it automatically. Of course, if one wants to set it manually, he can simply modify our algorithm by moving the first line into the Input.

The technique proposed in this paper could also be applied to HHL algorithm, which would have the same time complexity as WZP \cite{wossnig2018quantum}.
It is worth noting that our algorithm's running time is actually related to the mean of the eigenvalues $\mathbb E[\lambda]$ of $\bm{F}$, see Eq.~(\ref{eq:qdf:mean_lambda}). If $\mathbb E[\lambda]$ is close to $-\|\bm{F}\|_*$ or all the eigenvalues are negative, then the running time is actually relatively small, e.g, maybe logarithmic on the matrix dimension $N$. If $\mathbb E[\lambda]=0$, as shown in the case of $\tilde{\bm{F}}$ in Eq.~(\ref{eq:df:solution}), or $\mathbb E[\lambda]>0$, then the running time is root quadratic on the matrix dimension, as stated in this paper.
However, on the whole, the time complexity of our algorithm is still polynomial in the dimension of the data matrix, because it is derived from the Frobenius norm, or more precisely, from the binary tree data structure \cite{kerenidis2016quantum}. Whether there exists a QDF algorithm which runs in logarithmic time on the dimension of non-sparse matrices is still need to be explored.

\begin{acknowledgments}
We thank Prof. Sanjiang Li for helpful discussions and proofreading the manuscript. G. Li acknowledges the financial support from China Scholarship Council (No. 201806070139). This work was partly supported by the Australian Research Council (Grant No: DP180100691) and the Baidu-UTS collaborative project ``AI meets Quantum: Quantum algorithms for knowledge representation and learning''.
\end{acknowledgments}

\bibliography{references}

\begin{thebibliography}{27}%
\makeatletter
\providecommand \@ifxundefined [1]{%
 \@ifx{#1\undefined}
}%
\providecommand \@ifnum [1]{%
 \ifnum #1\expandafter \@firstoftwo
 \else \expandafter \@secondoftwo
 \fi
}%
\providecommand \@ifx [1]{%
 \ifx #1\expandafter \@firstoftwo
 \else \expandafter \@secondoftwo
 \fi
}%
\providecommand \natexlab [1]{#1}%
\providecommand \enquote  [1]{``#1''}%
\providecommand \bibnamefont  [1]{#1}%
\providecommand \bibfnamefont [1]{#1}%
\providecommand \citenamefont [1]{#1}%
\providecommand \href@noop [0]{\@secondoftwo}%
\providecommand \href [0]{\begingroup \@sanitize@url \@href}%
\providecommand \@href[1]{\@@startlink{#1}\@@href}%
\providecommand \@@href[1]{\endgroup#1\@@endlink}%
\providecommand \@sanitize@url [0]{\catcode `\\12\catcode `\$12\catcode
  `\&12\catcode `\#12\catcode `\^12\catcode `\_12\catcode `\%12\relax}%
\providecommand \@@startlink[1]{}%
\providecommand \@@endlink[0]{}%
\providecommand \url  [0]{\begingroup\@sanitize@url \@url }%
\providecommand \@url [1]{\endgroup\@href {#1}{\urlprefix }}%
\providecommand \urlprefix  [0]{URL }%
\providecommand \Eprint [0]{\href }%
\providecommand \doibase [0]{http://dx.doi.org/}%
\providecommand \selectlanguage [0]{\@gobble}%
\providecommand \bibinfo  [0]{\@secondoftwo}%
\providecommand \bibfield  [0]{\@secondoftwo}%
\providecommand \translation [1]{[#1]}%
\providecommand \BibitemOpen [0]{}%
\providecommand \bibitemStop [0]{}%
\providecommand \bibitemNoStop [0]{.\EOS\space}%
\providecommand \EOS [0]{\spacefactor3000\relax}%
\providecommand \BibitemShut  [1]{\csname bibitem#1\endcsname}%
\let\auto@bib@innerbib\@empty
\bibitem [{\citenamefont {Biamonte}\ \emph {et~al.}(2017)\citenamefont
  {Biamonte}, \citenamefont {Wittek}, \citenamefont {Pancotti}, \citenamefont
  {Rebentrost}, \citenamefont {Wiebe},\ and\ \citenamefont
  {Lloyd}}]{biamonte2017quantum}%
  \BibitemOpen
  \bibfield  {author} {\bibinfo {author} {\bibfnamefont {J.}~\bibnamefont
  {Biamonte}}, \bibinfo {author} {\bibfnamefont {P.}~\bibnamefont {Wittek}},
  \bibinfo {author} {\bibfnamefont {N.}~\bibnamefont {Pancotti}}, \bibinfo
  {author} {\bibfnamefont {P.}~\bibnamefont {Rebentrost}}, \bibinfo {author}
  {\bibfnamefont {N.}~\bibnamefont {Wiebe}}, \ and\ \bibinfo {author}
  {\bibfnamefont {S.}~\bibnamefont {Lloyd}},\ }\href@noop {} {\bibfield
  {journal} {\bibinfo  {journal} {Nature}\ }\textbf {\bibinfo {volume} {549}},\
  \bibinfo {pages} {195} (\bibinfo {year} {2017})}\BibitemShut {NoStop}%
\bibitem [{\citenamefont {Wittek}(2014)}]{wittek2014quantum}%
  \BibitemOpen
  \bibfield  {author} {\bibinfo {author} {\bibfnamefont {P.}~\bibnamefont
  {Wittek}},\ }\href@noop {} {\emph {\bibinfo {title} {Quantum machine
  learning: what quantum computing means to data mining}}}\ (\bibinfo
  {publisher} {Academic Press},\ \bibinfo {year} {2014})\BibitemShut {NoStop}%
\bibitem [{\citenamefont {Harrow}\ \emph {et~al.}(2009)\citenamefont {Harrow},
  \citenamefont {Hassidim},\ and\ \citenamefont {Lloyd}}]{harrow2009quantum}%
  \BibitemOpen
  \bibfield  {author} {\bibinfo {author} {\bibfnamefont {A.~W.}\ \bibnamefont
  {Harrow}}, \bibinfo {author} {\bibfnamefont {A.}~\bibnamefont {Hassidim}}, \
  and\ \bibinfo {author} {\bibfnamefont {S.}~\bibnamefont {Lloyd}},\
  }\href@noop {} {\bibfield  {journal} {\bibinfo  {journal} {Physical review
  letters}\ }\textbf {\bibinfo {volume} {103}},\ \bibinfo {pages} {150502}
  (\bibinfo {year} {2009})}\BibitemShut {NoStop}%
\bibitem [{\citenamefont {Rebentrost}\ \emph {et~al.}(2014)\citenamefont
  {Rebentrost}, \citenamefont {Mohseni},\ and\ \citenamefont
  {Lloyd}}]{rebentrost2014quantum}%
  \BibitemOpen
  \bibfield  {author} {\bibinfo {author} {\bibfnamefont {P.}~\bibnamefont
  {Rebentrost}}, \bibinfo {author} {\bibfnamefont {M.}~\bibnamefont {Mohseni}},
  \ and\ \bibinfo {author} {\bibfnamefont {S.}~\bibnamefont {Lloyd}},\
  }\href@noop {} {\bibfield  {journal} {\bibinfo  {journal} {Physical review
  letters}\ }\textbf {\bibinfo {volume} {113}},\ \bibinfo {pages} {130503}
  (\bibinfo {year} {2014})}\BibitemShut {NoStop}%
\bibitem [{\citenamefont {Kerenidis}\ and\ \citenamefont
  {Prakash}(2016)}]{kerenidis2016quantum}%
  \BibitemOpen
  \bibfield  {author} {\bibinfo {author} {\bibfnamefont {I.}~\bibnamefont
  {Kerenidis}}\ and\ \bibinfo {author} {\bibfnamefont {A.}~\bibnamefont
  {Prakash}},\ }\href@noop {} {\bibfield  {journal} {\bibinfo  {journal} {arXiv
  preprint arXiv:1603.08675}\ } (\bibinfo {year} {2016})}\BibitemShut {NoStop}%
\bibitem [{\citenamefont {Schuld}\ \emph {et~al.}(2016)\citenamefont {Schuld},
  \citenamefont {Sinayskiy},\ and\ \citenamefont
  {Petruccione}}]{schuld2016prediction}%
  \BibitemOpen
  \bibfield  {author} {\bibinfo {author} {\bibfnamefont {M.}~\bibnamefont
  {Schuld}}, \bibinfo {author} {\bibfnamefont {I.}~\bibnamefont {Sinayskiy}}, \
  and\ \bibinfo {author} {\bibfnamefont {F.}~\bibnamefont {Petruccione}},\
  }\href@noop {} {\bibfield  {journal} {\bibinfo  {journal} {Physical Review
  A}\ }\textbf {\bibinfo {volume} {94}},\ \bibinfo {pages} {022342} (\bibinfo
  {year} {2016})}\BibitemShut {NoStop}%
\bibitem [{\citenamefont {Wiebe}\ \emph
  {et~al.}(2014{\natexlab{a}})\citenamefont {Wiebe}, \citenamefont {Kapoor},\
  and\ \citenamefont {Svore}}]{wiebe2014quantum}%
  \BibitemOpen
  \bibfield  {author} {\bibinfo {author} {\bibfnamefont {N.}~\bibnamefont
  {Wiebe}}, \bibinfo {author} {\bibfnamefont {A.}~\bibnamefont {Kapoor}}, \
  and\ \bibinfo {author} {\bibfnamefont {K.}~\bibnamefont {Svore}},\
  }\href@noop {} {\bibfield  {journal} {\bibinfo  {journal} {arXiv preprint
  arXiv:1401.2142}\ } (\bibinfo {year} {2014}{\natexlab{a}})}\BibitemShut
  {NoStop}%
\bibitem [{\citenamefont {Wiebe}\ \emph
  {et~al.}(2014{\natexlab{b}})\citenamefont {Wiebe}, \citenamefont {Kapoor},\
  and\ \citenamefont {Svore}}]{wiebe2014quantum1}%
  \BibitemOpen
  \bibfield  {author} {\bibinfo {author} {\bibfnamefont {N.}~\bibnamefont
  {Wiebe}}, \bibinfo {author} {\bibfnamefont {A.}~\bibnamefont {Kapoor}}, \
  and\ \bibinfo {author} {\bibfnamefont {K.~M.}\ \bibnamefont {Svore}},\
  }\href@noop {} {\bibfield  {journal} {\bibinfo  {journal} {arXiv preprint
  arXiv:1412.3489}\ } (\bibinfo {year} {2014}{\natexlab{b}})}\BibitemShut
  {NoStop}%
\bibitem [{\citenamefont {Kapoor}\ \emph {et~al.}(2016)\citenamefont {Kapoor},
  \citenamefont {Wiebe},\ and\ \citenamefont {Svore}}]{kapoor2016quantum}%
  \BibitemOpen
  \bibfield  {author} {\bibinfo {author} {\bibfnamefont {A.}~\bibnamefont
  {Kapoor}}, \bibinfo {author} {\bibfnamefont {N.}~\bibnamefont {Wiebe}}, \
  and\ \bibinfo {author} {\bibfnamefont {K.}~\bibnamefont {Svore}},\ }in\
  \href@noop {} {\emph {\bibinfo {booktitle} {Advances in Neural Information
  Processing Systems}}}\ (\bibinfo {year} {2016})\ pp.\ \bibinfo {pages}
  {3999--4007}\BibitemShut {NoStop}%
\bibitem [{\citenamefont {Zhao}\ \emph {et~al.}(2015)\citenamefont {Zhao},
  \citenamefont {Fitzsimons},\ and\ \citenamefont
  {Fitzsimons}}]{zhao2015quantum}%
  \BibitemOpen
  \bibfield  {author} {\bibinfo {author} {\bibfnamefont {Z.}~\bibnamefont
  {Zhao}}, \bibinfo {author} {\bibfnamefont {J.~K.}\ \bibnamefont
  {Fitzsimons}}, \ and\ \bibinfo {author} {\bibfnamefont {J.~F.}\ \bibnamefont
  {Fitzsimons}},\ }\href@noop {} {\bibfield  {journal} {\bibinfo  {journal}
  {arXiv preprint arXiv:1512.03929}\ } (\bibinfo {year} {2015})}\BibitemShut
  {NoStop}%
\bibitem [{\citenamefont {Lloyd}\ \emph {et~al.}(2014)\citenamefont {Lloyd},
  \citenamefont {Mohseni},\ and\ \citenamefont
  {Rebentrost}}]{lloyd2014quantum}%
  \BibitemOpen
  \bibfield  {author} {\bibinfo {author} {\bibfnamefont {S.}~\bibnamefont
  {Lloyd}}, \bibinfo {author} {\bibfnamefont {M.}~\bibnamefont {Mohseni}}, \
  and\ \bibinfo {author} {\bibfnamefont {P.}~\bibnamefont {Rebentrost}},\
  }\href@noop {} {\bibfield  {journal} {\bibinfo  {journal} {Nature Physics}\
  }\textbf {\bibinfo {volume} {10}},\ \bibinfo {pages} {631} (\bibinfo {year}
  {2014})}\BibitemShut {NoStop}%
\bibitem [{\citenamefont {Low}\ \emph {et~al.}(2014)\citenamefont {Low},
  \citenamefont {Yoder},\ and\ \citenamefont {Chuang}}]{low2014quantum}%
  \BibitemOpen
  \bibfield  {author} {\bibinfo {author} {\bibfnamefont {G.~H.}\ \bibnamefont
  {Low}}, \bibinfo {author} {\bibfnamefont {T.~J.}\ \bibnamefont {Yoder}}, \
  and\ \bibinfo {author} {\bibfnamefont {I.~L.}\ \bibnamefont {Chuang}},\
  }\href@noop {} {\bibfield  {journal} {\bibinfo  {journal} {Physical Review
  A}\ }\textbf {\bibinfo {volume} {89}},\ \bibinfo {pages} {062315} (\bibinfo
  {year} {2014})}\BibitemShut {NoStop}%
\bibitem [{\citenamefont {Rebentrost}\ \emph {et~al.}(2016)\citenamefont
  {Rebentrost}, \citenamefont {Schuld}, \citenamefont {Wossnig}, \citenamefont
  {Petruccione},\ and\ \citenamefont {Lloyd}}]{rebentrost2016quantum}%
  \BibitemOpen
  \bibfield  {author} {\bibinfo {author} {\bibfnamefont {P.}~\bibnamefont
  {Rebentrost}}, \bibinfo {author} {\bibfnamefont {M.}~\bibnamefont {Schuld}},
  \bibinfo {author} {\bibfnamefont {L.}~\bibnamefont {Wossnig}}, \bibinfo
  {author} {\bibfnamefont {F.}~\bibnamefont {Petruccione}}, \ and\ \bibinfo
  {author} {\bibfnamefont {S.}~\bibnamefont {Lloyd}},\ }\href@noop {}
  {\bibfield  {journal} {\bibinfo  {journal} {arXiv preprint arXiv:1612.01789}\
  } (\bibinfo {year} {2016})}\BibitemShut {NoStop}%
\bibitem [{\citenamefont {Ciliberto}\ \emph {et~al.}(2018)\citenamefont
  {Ciliberto}, \citenamefont {Herbster}, \citenamefont {Ialongo}, \citenamefont
  {Pontil}, \citenamefont {Rocchetto}, \citenamefont {Severini},\ and\
  \citenamefont {Wossnig}}]{ciliberto2018quantum}%
  \BibitemOpen
  \bibfield  {author} {\bibinfo {author} {\bibfnamefont {C.}~\bibnamefont
  {Ciliberto}}, \bibinfo {author} {\bibfnamefont {M.}~\bibnamefont {Herbster}},
  \bibinfo {author} {\bibfnamefont {A.~D.}\ \bibnamefont {Ialongo}}, \bibinfo
  {author} {\bibfnamefont {M.}~\bibnamefont {Pontil}}, \bibinfo {author}
  {\bibfnamefont {A.}~\bibnamefont {Rocchetto}}, \bibinfo {author}
  {\bibfnamefont {S.}~\bibnamefont {Severini}}, \ and\ \bibinfo {author}
  {\bibfnamefont {L.}~\bibnamefont {Wossnig}},\ }\href@noop {} {\bibfield
  {journal} {\bibinfo  {journal} {Proceedings Of The Royal Society A:
  Mathematical, Physical and Engineering Sciences}\ }\textbf {\bibinfo {volume}
  {474}},\ \bibinfo {pages} {20170551} (\bibinfo {year} {2018})}\BibitemShut
  {NoStop}%
\bibitem [{\citenamefont {Wiebe}\ \emph {et~al.}(2012)\citenamefont {Wiebe},
  \citenamefont {Braun},\ and\ \citenamefont {Lloyd}}]{wiebe2012quantum}%
  \BibitemOpen
  \bibfield  {author} {\bibinfo {author} {\bibfnamefont {N.}~\bibnamefont
  {Wiebe}}, \bibinfo {author} {\bibfnamefont {D.}~\bibnamefont {Braun}}, \ and\
  \bibinfo {author} {\bibfnamefont {S.}~\bibnamefont {Lloyd}},\ }\href@noop {}
  {\bibfield  {journal} {\bibinfo  {journal} {Physical review letters}\
  }\textbf {\bibinfo {volume} {109}},\ \bibinfo {pages} {050505} (\bibinfo
  {year} {2012})}\BibitemShut {NoStop}%
\bibitem [{\citenamefont {Childs}(2010)}]{childs2010relationship}%
  \BibitemOpen
  \bibfield  {author} {\bibinfo {author} {\bibfnamefont {A.~M.}\ \bibnamefont
  {Childs}},\ }\href@noop {} {\bibfield  {journal} {\bibinfo  {journal}
  {Communications in Mathematical Physics}\ }\textbf {\bibinfo {volume}
  {294}},\ \bibinfo {pages} {581} (\bibinfo {year} {2010})}\BibitemShut
  {NoStop}%
\bibitem [{\citenamefont {Berry}\ and\ \citenamefont
  {Childs}(2009)}]{berry2009black}%
  \BibitemOpen
  \bibfield  {author} {\bibinfo {author} {\bibfnamefont {D.~W.}\ \bibnamefont
  {Berry}}\ and\ \bibinfo {author} {\bibfnamefont {A.~M.}\ \bibnamefont
  {Childs}},\ }\href@noop {} {\bibfield  {journal} {\bibinfo  {journal} {arXiv
  preprint arXiv:0910.4157}\ } (\bibinfo {year} {2009})}\BibitemShut {NoStop}%
\bibitem [{\citenamefont {Liu}\ and\ \citenamefont
  {Zhang}(2015)}]{liu2015fast}%
  \BibitemOpen
  \bibfield  {author} {\bibinfo {author} {\bibfnamefont {Y.}~\bibnamefont
  {Liu}}\ and\ \bibinfo {author} {\bibfnamefont {S.}~\bibnamefont {Zhang}},\
  }in\ \href@noop {} {\emph {\bibinfo {booktitle} {International Workshop on
  Frontiers in Algorithmics}}}\ (\bibinfo {organization} {Springer},\ \bibinfo
  {year} {2015})\ pp.\ \bibinfo {pages} {204--216}\BibitemShut {NoStop}%
\bibitem [{\citenamefont {Hawkins}(2004)}]{hawkins2004problem}%
  \BibitemOpen
  \bibfield  {author} {\bibinfo {author} {\bibfnamefont {D.~M.}\ \bibnamefont
  {Hawkins}},\ }\href@noop {} {\bibfield  {journal} {\bibinfo  {journal}
  {Journal of chemical information and computer sciences}\ }\textbf {\bibinfo
  {volume} {44}},\ \bibinfo {pages} {1} (\bibinfo {year} {2004})}\BibitemShut
  {NoStop}%
\bibitem [{\citenamefont {Hoerl}\ and\ \citenamefont
  {Kennard}(1970)}]{hoerl1970ridge}%
  \BibitemOpen
  \bibfield  {author} {\bibinfo {author} {\bibfnamefont {A.~E.}\ \bibnamefont
  {Hoerl}}\ and\ \bibinfo {author} {\bibfnamefont {R.~W.}\ \bibnamefont
  {Kennard}},\ }\href@noop {} {\bibfield  {journal} {\bibinfo  {journal}
  {Technometrics}\ }\textbf {\bibinfo {volume} {12}},\ \bibinfo {pages} {55}
  (\bibinfo {year} {1970})}\BibitemShut {NoStop}%
\bibitem [{\citenamefont {Wossnig}\ \emph {et~al.}(2018)\citenamefont
  {Wossnig}, \citenamefont {Zhao},\ and\ \citenamefont
  {Prakash}}]{wossnig2018quantum}%
  \BibitemOpen
  \bibfield  {author} {\bibinfo {author} {\bibfnamefont {L.}~\bibnamefont
  {Wossnig}}, \bibinfo {author} {\bibfnamefont {Z.}~\bibnamefont {Zhao}}, \
  and\ \bibinfo {author} {\bibfnamefont {A.}~\bibnamefont {Prakash}},\
  }\href@noop {} {\bibfield  {journal} {\bibinfo  {journal} {Physical review
  letters}\ }\textbf {\bibinfo {volume} {120}},\ \bibinfo {pages} {050502}
  (\bibinfo {year} {2018})}\BibitemShut {NoStop}%
\bibitem [{\citenamefont {Meng}\ \emph {et~al.}(2018)\citenamefont {Meng},
  \citenamefont {Yu}, \citenamefont {Xiang},\ and\ \citenamefont
  {Zhang}}]{meng2018quantum}%
  \BibitemOpen
  \bibfield  {author} {\bibinfo {author} {\bibfnamefont {F.-X.}\ \bibnamefont
  {Meng}}, \bibinfo {author} {\bibfnamefont {X.-T.}\ \bibnamefont {Yu}},
  \bibinfo {author} {\bibfnamefont {R.-Q.}\ \bibnamefont {Xiang}}, \ and\
  \bibinfo {author} {\bibfnamefont {Z.-C.}\ \bibnamefont {Zhang}},\ }\href@noop
  {} {\bibfield  {journal} {\bibinfo  {journal} {IEEE Access}\ } (\bibinfo
  {year} {2018})}\BibitemShut {NoStop}%
\bibitem [{\citenamefont {Yu}\ \emph {et~al.}(2017)\citenamefont {Yu},
  \citenamefont {Gao},\ and\ \citenamefont {Wen}}]{yu2017quantum}%
  \BibitemOpen
  \bibfield  {author} {\bibinfo {author} {\bibfnamefont {C.-H.}\ \bibnamefont
  {Yu}}, \bibinfo {author} {\bibfnamefont {F.}~\bibnamefont {Gao}}, \ and\
  \bibinfo {author} {\bibfnamefont {Q.-Y.}\ \bibnamefont {Wen}},\ }\href@noop
  {} {\bibfield  {journal} {\bibinfo  {journal} {arXiv preprint
  arXiv:1707.09524}\ } (\bibinfo {year} {2017})}\BibitemShut {NoStop}%
\bibitem [{\citenamefont {Kitaev}(1995)}]{kitaev1995quantum}%
  \BibitemOpen
  \bibfield  {author} {\bibinfo {author} {\bibfnamefont {A.~Y.}\ \bibnamefont
  {Kitaev}},\ }\href@noop {} {\bibfield  {journal} {\bibinfo  {journal} {arXiv
  preprint quant-ph/9511026}\ } (\bibinfo {year} {1995})}\BibitemShut {NoStop}%
\bibitem [{\citenamefont {Svergun}(1992)}]{svergun1992determination}%
  \BibitemOpen
  \bibfield  {author} {\bibinfo {author} {\bibfnamefont {D.}~\bibnamefont
  {Svergun}},\ }\href@noop {} {\bibfield  {journal} {\bibinfo  {journal}
  {Journal of applied crystallography}\ }\textbf {\bibinfo {volume} {25}},\
  \bibinfo {pages} {495} (\bibinfo {year} {1992})}\BibitemShut {NoStop}%
\bibitem [{\citenamefont {Brassard}\ \emph {et~al.}(2002)\citenamefont
  {Brassard}, \citenamefont {Hoyer}, \citenamefont {Mosca},\ and\ \citenamefont
  {Tapp}}]{brassard2002quantum}%
  \BibitemOpen
  \bibfield  {author} {\bibinfo {author} {\bibfnamefont {G.}~\bibnamefont
  {Brassard}}, \bibinfo {author} {\bibfnamefont {P.}~\bibnamefont {Hoyer}},
  \bibinfo {author} {\bibfnamefont {M.}~\bibnamefont {Mosca}}, \ and\ \bibinfo
  {author} {\bibfnamefont {A.}~\bibnamefont {Tapp}},\ }\href@noop {} {\bibfield
   {journal} {\bibinfo  {journal} {Contemporary Mathematics}\ }\textbf
  {\bibinfo {volume} {305}},\ \bibinfo {pages} {53} (\bibinfo {year}
  {2002})}\BibitemShut {NoStop}%
\bibitem [{\citenamefont {Montavon}\ \emph {et~al.}(1998)\citenamefont
  {Montavon}, \citenamefont {Orr},\ and\ \citenamefont
  {M{\"u}ller}}]{montavon1998tricks}%
  \BibitemOpen
  \bibfield  {author} {\bibinfo {author} {\bibfnamefont {G.}~\bibnamefont
  {Montavon}}, \bibinfo {author} {\bibfnamefont {G.~B.}\ \bibnamefont {Orr}}, \
  and\ \bibinfo {author} {\bibfnamefont {K.-R.}\ \bibnamefont {M{\"u}ller}},\
  }\href@noop {} {\  (\bibinfo {year} {1998})}\BibitemShut {NoStop}%
\end{thebibliography}%
\bibliographystyle{apsrev4-1}
\end{document}